\newtheorem{theorem}{Theorem}[section]
\theoremstyle{definition}
\DeclareMathOperator*{\minimize}{minimize}
\DeclareMathOperator*{\dist}{dist}
\newcommand{\CE}{\mathcal{E}}
\newcommand{\CV}{\mathcal{V}}
\newcommand{\SH}{\mathscr{H}}
\newcommand{\DR}{\mathbb{R}}
\newcommand{\DZ}{\mathbb{Z}}
\author{James Stokes$^{1\ast}$}
\address{$^1$Center for Computational Quantum Physics and Center for Computational Mathematics, Flatiron Institute, New York, NY 10010 USA}
\author{Saibal De$^{2\ast}$}
\author{Shravan Veerapaneni$^2$}
\address{$^2$Department of Mathematics, University of Michigan, Ann Arbor, MI 48109 USA}
\author{Giuseppe Carleo$^3$}
\address{$^3$Institute of Physics, \'{E}cole Polytechnique F\'{e}d\'{e}rale de Lausanne (EPFL), CH-1015 Lausanne, Switzerland}
\address{$^\ast$These authors contributed equally to this work.}
\title{Continuous-variable neural-network quantum states and the quantum rotor model}
\begin{document}

\begin{abstract}
    We initiate the study of neural-network quantum state algorithms for analyzing continuous-variable lattice quantum systems in first quantization. A simple family of continuous-variable trial wavefunctons is introduced which naturally generalizes the restricted Boltzmann machine (RBM) wavefunction introduced for analyzing quantum spin systems. By virtue of its simplicity, the same variational Monte Carlo training algorithms that have been developed for ground state determination and time evolution of spin systems have natural analogues in the continuum. We offer a proof of principle demonstration in the context of ground state determination of a stoquastic quantum rotor Hamiltonian. Results are compared against those obtained from partial differential equation (PDE) based scalable eigensolvers. This study serves as a benchmark against which future investigation of continuous-variable neural quantum states can be compared, and points to the need to consider deep network architectures and more sophisticated training algorithms.
\end{abstract}

\maketitle

\section{Introduction}

Variational Monte Carlo (VMC) approaches to the quantum many-body problem \cite{mcmillan1965ground} have witnessed a recent resurgence in activity fueled by the realization that when neural networks are exploited as systematically improvable trial wavefunctions, direct attack on otherwise intractable quantum many-body systems becomes possible. The success of so-called neural-network quantum states \cite{carleo2017solving} is closely paralleled by the ability of deep neural networks to overcome a related curse-of-dimensionality in a variety of high-dimensional machine learning tasks. The key features shared by these learning tasks is that they involve iterative fitting of a high-dimensional function approximator using various forms of stochastic approximation to an objective function, for which the learner has incomplete knowledge in the form of samples. Applying this perspective to the VMC for ground state determination, the variational wavefunction acts as a data generating process from which IID samples can be drawn. These samples provide noisy estimates for a Rayleigh quotient to be optimized, whose exact dependence on the variational parameters is unknown.

In this paper, we argue that quantum many-body simulation can leverage the success of geometric deep learning \cite{bronstein2017geometric, bronstein2021geometric} from two different perspectives which are based on group invariances of the Hamiltonian and the space of states, respectively. The first perspective applies whenever the configuration space of the Hamiltonian has a non-Euclidean structure, as in the hyperbolic lattices \cite{boettcher2020quantum, zhu2021quantum} currently under investigation using circuit quantum electrodynamics \cite{kollar2019hyperbolic}. In the second perspective, one is concerned with efficiently describing states of the Hilbert space which are invariant or equivariant to group invariances of the Hamiltonian. In the theory of many-body Schr\"odinger operators, for example, it can be shown that the unrestricted ground state inherits any group invariances of the parent Hamiltonian.

Motivated by the above simulation prospects, in this paper we initiate the study of continuous-variable VMC on non-Euclidean spaces, proposing a Hamiltonian which can be summarized by a small amount of geometric data, whose invariances are respected by the quantum theory. We limit our investigation to the simplest non-trivial target space geometry corresponding to a system of quantum rotors.  Ref.~ \cite{iblisdir2007matrix} undertook a variational ground state study of the quantum rotor model in one spatial dimension using basis truncation and matrix product states. The neural network approach advocated here, in contrast, does not rely on a choice of basis {and} could potentially be advantageous in the analysis of more complicated geometries such as hyperbolic manifolds, which are central to the study of quantum chaos \cite{gutzwiller1985geometry}. Another possible advantage of the neural network approach compared to \cite{iblisdir2007matrix} is the lack of reliance on the density matrix renormalization group training algorithm \cite{white1992density, white1993density}, which tends to struggle beyond one spatial dimension or in applications to disordered systems without a lattice.

Rather than pursuing the full machinery of geometric deep learning for variational simulation, we choose to focus on the introduction of baselines and illustration of general techniques using minimalist architectures which generalize those originally introduced for quantum spin systems \cite{carleo2017solving}. In particular, experiments are conducted using a rotor variant of the restricted Boltzmann machine applicable to the planar quantum rotor model, which can be understood as a continuous-variable relaxation of the quantum Ising model. The results are compared against scalable PDE-based eigensolvers and the extension of the method to other geometries is discussed.

The paper is structured as follows: we first introduce a geometrically motivated Hamiltonian, highlighting some subtleties that arise in non-Euclidean space. The planar quantum rotor model is identified as the simplest candidate system for illustrating the applicability of geometric machine learning techniques. Baselines are introduced for ground state preparation of the rotor model, which we investigate using a combination of variational techniques inspired by shallow neural networks, as well as techniques based on scalable PDE eigensolvers. We conclude by summarizing future directions.

\section{States and Hamiltonian}

The quantum systems under consideration describe finitely many particles moving on a Riemannian target space subject to two-body interactions. In particular, given a choice of Riemannian manifold $(M,g)$, a finite simple undirected graph $G = (\mathcal{V},\mathcal{E})$ decorated by vertex and edge weights functions $h : \mathcal{V} \to \mathbb{R}_{\geq 0}$ and $\beta : \mathcal{E} \to \mathbb{R}$ respectively, and a choice of potential function $V: \mathbb{R}_{\geq 0} \to \mathbb{R}$, we define the generator of time evolution $H$ in the infinite-dimensional Hilbert space of states $\SH = \bigotimes_{i \in \mathcal{V}} L^2(M)$ as follows,
\begin{equation}\label{e:hamiltonian}
    H\psi = -\frac{1}{2}\sum_{i \in \mathcal{V}} h_i \, \Delta_{g_i}\psi + \sum_{\{ i,j \} \in \mathcal{E}} \beta_{ij} \, V\big(\!\dist(x_i,x_j)\big) \psi \enspace ,
\end{equation}
where $\Delta_{g_i}$ denotes the Laplace-Beltrami operator acting on $M_i$ and $\dist : M \times M \to \mathbb{R}_{\geq 0}$ denotes the Riemannian distance function. Since the Hamiltonian $H$ depends only on the intrinsic geometry of the Riemannian manifold (via the distance and Laplace operator), the associated quantum theory inherits the invariances of the Riemannian space $(M,g)$ and likewise for the invariances of the interaction graph. Under appropriate assumptions on the potential, these invariances are inherited by the unique positive ground state of $H$ \cite{glimm2012quantum}. The geometry of the target space $(M,g)$ has non-trivial implications for the smoothness of the potential energy. In particular, the distance function $\dist(x,\cdot)$, viewed as a function of its second argument, will generically suffer from cusp singularities at $x$ and at any points $y \in M$ admitting multiple minimizing geodesics to $x$ (the so-called cut locus).

The primary goals of this paper are the development of scalable variational approaches to solve  both the ground state eigenvalue problem and the time evolution problem corresponding to the Schr\"{o}dinger operator \eqref{e:hamiltonian}. In this paper, however, we only consider the ground state eigenvalue problem, which can be rephrased as the following unconstrained optimization problem over $\SH$,
\begin{equation}
    \minimize_{\psi \in \SH} R_H(\psi), \quad R_H(\psi) = \frac{\langle \psi, H \psi \rangle}{\langle \psi, \psi \rangle} \enspace .
\end{equation}

The simplest Riemannian manifold after Euclidean space is the $d$-dimensional unit sphere $S^d$. In the interests of simplicity we focus in this initial work on $d=1$, corresponding to the unit circle $S^1$. In terms of the implicit angular parametrization $\theta \in [0,2\pi)$, the Riemannian distance function on the circle is given by
\begin{equation}
    \dist(\theta,\theta') = \min\{|\theta- \theta'|, 2\pi - |\theta - \theta'| \} \enspace ,
\end{equation}
which exhibits the expected cusp singularities at antipodal points defined by $|\theta - \theta'| \in \{0, \pi \}$. In this example the cut locus is a single point. The singularities of the distance function can smoothed out by a suitable choice of potential, which we take to be of cosinusoidal form. The resulting Hamiltonian is that of the quantum rotor model formulated on an arbitrary graph,
\begin{equation}
    \label{eq:rotor-hamiltonian}
    H = - \frac{1}{2} \sum_{i \in \CV} h_i \frac{\partial^2}{\partial \theta_i^2} + \sum_{\{ i,j \}\in \CE} \beta_{ij} \big[2-2\cos(\theta_i - \theta_j)\big] \enspace .
\end{equation}

The analysis of manifolds other than the sphere is also clearly of interest. In the Appendix \ref{sec:sigma} we explain how the Hamiltonian \eqref{e:hamiltonian} provides a lattice regularization for quantum non-linear sigma models.

\section{Rotor Restricted  Boltzmann Machine}

Inspired by the restricted Boltzamnn neural network quantum state introduced in \cite{carleo2017solving}, we introduce a class of trial wavefunctions suitable for time evolution and ground state determination for the quantum rotor model \eqref{eq:rotor-hamiltonian}. Since everything generalizes to rotors of arbitrary dimension, we first consider the $S^d$ target space and subsequently specialize to the circle ($d=1$). Denote the classical configuration of $n := |\mathcal{V}|$ visible rotors by $x := (\vec{x}_1,\ldots,\vec{x}_n) \in (S^d)^n$. Following \cite{carleo2017solving}, we assign a probability amplitude to each configuration of rotors $x \in (S^d)^n$ by integrating a Boltzmann factor over a space consisting of $m$ hidden rotors, whose collective coordinates are denoted $z := (\vec{z}_1,\ldots,\vec{z}_m) \in (S^d)^m$.
In order to construct a suitable Boltzmann factor, we consider the isometric embedding of the target space $S^d \subseteq \mathbb{R}^{d+1}$ into ($d+1$)-dimensional Euclidean space and choose the exponent to be of restricted Boltzmann form,
    \begin{equation}
    \psi(x) = \int_{(S^d)^m} {\rm d}\mu(z) \exp\left[\sum_{i=1}^m\sum_{j=1}^n a_{ij} \langle \vec{z}_i, \vec{x}_j \rangle  + \sum_{i=1}^m \langle \vec{b}_i, \vec{z}_i \rangle + \sum_{j=1}^n \langle \vec{c}_j, \vec{x}_j \rangle\right] \enspace ,
\end{equation}
where ${\rm d}\mu(z)$ denotes the surface measure on $(S^d)^m$ (the counting measure for $d=0$).
In the above expression $\langle \cdot, \cdot \rangle$ denotes the dot product for $\mathbb{R}^{d+1}$ and the variational parameters are given by  $a_{ij} \in \mathbb{R}$,  $\vec{b}_i \in \mathbb{R}^{d+1}$ and $\vec{c}_j \in \mathbb{R}^{d+1}$ for all $(i,j) \in [m] \times [n]$.
If all the bias terms vanish, then the amplitude is invariant under global $O(d+1)$ transformations of the visible rotors, the proof of which follows by a change of integration variables combined with invariance of the integration measure. For $d=0$ one has $S^0 = \{\pm 1\} \subseteq \mathbb{R}$ and $O(1) = \mathbb{Z}_2$, reproducing the proposal of \cite{carleo2017solving}. The weights and biases can also be promoted to complex numbers, resulting in a holomorphic parametrization  suitable for time evolution. In this paper, however, we only consider real parametrizations since the Hamiltonian \eqref{eq:rotor-hamiltonian} is known to have a non-negative ground state. The integration over the hidden rotors can be performed for any value of $d \in \mathbb{N}$. In the case of relevance to \eqref{eq:rotor-hamiltonian} we obtain the following logarithmic probability amplitude
\begin{equation}
    \log\psi(x) = \sum_{j=1}^n \langle \vec{c}_j, \vec{x}_j\rangle + \sum_{i=1}^m \log \big[2\pi I_0(\Vert \vec{y}_i \Vert)\big] \enspace ,
\end{equation}
where $y = ax + b$ is an affine transformation of the embedded rotor configurations $x \in \mathbb{R}^{n\times(d+1)}$, defined in terms of $a = (a_{ij})$ and $b \in \mathbb{R}^{n \times (d+1)}$, and $I_0$ denotes a modified Besssel function. The model can be trained using an efficient Markov Chain Monte Carlo method {generalizing \cite{carleo2017solving}, which is summarized in Appendix \ref{sec:training}}.

\section{Benchmarks: Jastrow Variational Wavefunction}

It is instructive to compare the form of the continuous-variable RBM with that of a $O(d+1)$-invariant Jastrow wavefunction, which is defined by local interactions dictated by the choice of interaction graph,
\begin{equation}\label{e:jastrow}
    \log\psi_{\rm J}(x) = \sum_{\{i,j\} \in \mathcal{E}} w_{ij} \langle \vec{x}_i, \vec{x}_j \rangle \enspace ,
\end{equation}
where $w_{ij}$ denote the variational parameters characterizing the trial wavefunction. Since the number of parameters is dictated by the choice of interaction graph, the Jastrow wavefunction, unlike the RBM, lacks the property of systematic improvability. The Jastrow wavefunction has the advantage that the Rayleigh quotient $R_H(\psi_J)$ can be computed analytically for certain interaction graphs. We carry out this calculation in the case of a linear (path) graph in Appendix \ref{sec:jastrow}

\section{Benchmarks: High-Dimensional PDE Solvers}

As a second validation of the ground states obtained from VMC simulations, we compare them against those obtained using partial differential equation (PDE) based eigensolvers. Here, we give a brief overview of our implementation of these PDE solvers. We restrict ourselves to the Hamiltonian (\ref{eq:rotor-hamiltonian}) associated with the unit circle $S^1$ with fixed positive weight $h_i = h$ on vertices $\CV = \qty{1, \ldots, n}$.
The associated eigenvalue problem is given by
\begin{equation}
    \label{eq:schrodinger}
    H \psi(\vb*{\theta}) = \lambda \psi(\vb*{\theta}) \enspace , \quad \quad H = - \frac{h}{2} \sum_{i = 1}^n \pdv[2]{\theta_i} + \sum_{\{ i,j \}\in \CE} \beta_{ij} \big[2-2\cos(\theta_i - \theta_j)\big] \enspace ,
\end{equation}
where we denote $\vb*{\theta} = (\theta_1, \ldots, \theta_n)$, $\theta_i \in [0, 2\pi)$ for $1 \leq i \leq n$.

One straightforward approach to solving \eqref{eq:schrodinger} is to discretize the domain using a regular grid on $[0, 2\pi)^n$, applying the finite difference approximation to the Schr\"odinger operator and solving the resulting algebraic eigenvalue problem. This will lead to a solution scheme whose error decays only polynomially depending on the order of the finite difference scheme used to approximate the Hamiltonian \cite{leveque2007finite}. Instead, we can take advantage of the periodicity of the domain and switch to the frequency domain using Fourier series expansion for the wavefunction:
\begin{equation}
    \label{eq:fourier}
    \psi(\vb*{\theta}) = \frac{1}{(2\pi)^n}\sum_{\vb*{\omega} \in \DZ^n} \hat{\psi}(\vb*{\omega}) \exp(-i \vb*{\omega} \cdot \vb*{\theta}) \enspace , \quad \quad \vb*{\omega} \cdot \vb*{\theta} = \sum_{i = 1}^n \omega_i \theta_i \enspace .
\end{equation}
As is well known, assuming $\psi$ is smooth, its truncated Fourier approximation will yield spectral accuracy. A numerical solver for \eqref{eq:schrodinger} can be constructed based on transforming the eigenvalue problem into the Fourier domain; its description is provided in Appendix \ref{sec:Fourier}, along with details on the eigenvalue algorithms.

One major obstacle to implementing either the finite difference or the Fourier spectral schemes described above is the memory cost: assuming $m$ degrees of freedom per dimension (number of grid points for finite difference or number of Fourier modes for spectral schemes), the discretized wavefunction will require storing $m^n$ scalar variables for a full representation. As the dimensionality of the problem (number of rotors) increases, it quickly becomes impossible to store the eigenfunction in the memory of a single computing node. This is a direct consequence of the \emph{curse of dimensionality}. To alleviate this issue and help scale our benchmark solvers, we adopt the distributed memory computing model, where we split the wavefunction across multiple computing nodes and use the message passing interface (MPI) library to implement necessary communication between the nodes. For our particular form of the Hamiltonian, the discretized operator ends up being sparse (both in the case of finite difference and Fourier spectral schemes), and this allows us to design fast matrix-vector products with minimal communication \cite{saad2003iterative}.

\section{Numerical Results}

We implemented the PDE eigensolvers and VMC algorithm in C++ and using state-of-the-art open source libraries; the PDE eigensolvers were built on top of Trilinos \cite{heroux2005overview} to support distributed computing. The code is available publicly at \texttt{https://github.com/shravanvn/cnqs}. The numerical experiments described in this section were run using version 1.0.0 of the code. All simulations were run on the Great Lakes cluster at the University of Michigan. Each compute node of this cluster is equipped with two 18-core 3.0 GHz Intel Xeon Gold 6154 processors and 192 GB RAM.

\subsection{Convergence of the Benchmark PDE Solver}

\begin{figure}
    \centering
    \begin{subfigure}{0.3\textwidth}
        \centering
        \begin{tikzpicture}
    \node [state] at (0, 2) (n1) {1};
    \node [state] at (2, 2) (n2) {2};
    \node [state] at (0, 0) (n3) {3};
    \node [state] at (2, 0) (n4) {4};
    
    \draw (n1) edge (n2);
    \draw (n1) edge (n3);
    \draw (n2) edge (n4);
    \draw (n3) edge (n4);
\end{tikzpicture}
        \caption{$2 \times 2$ lattice}
    \end{subfigure}
    ~
    \begin{subfigure}{0.3\textwidth}
        \centering
        \begin{tikzpicture}
    \node [state] at (0, 2) (n1) {1};
    \node [state] at (2, 2) (n2) {2};
    \node [state] at (0, 0) (n3) {3};
    \node [state] at (2, 0) (n4) {4};
    
    \draw (n1) edge (n2);
    \draw (n1) edge (n3);
    \draw (n1) edge (n4);
    \draw (n2) edge (n4);
    \draw (n2) edge (n3);
    \draw (n3) edge (n4);
\end{tikzpicture}
        \caption{4-node complete graph}
    \end{subfigure}
    \caption{Four-rotor networks used in convergence analysis of the PDE eigensolver.}
    \label{fig:four-rotor-networks}
\end{figure}
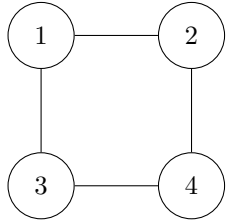
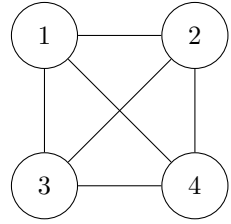

We performed self-convergence analysis of our PDE eigensolver on two four-rotor networks depicted in Figure~\ref{fig:four-rotor-networks} with vertex weights $h_i = 5$ and edge weights $\beta_{ij} = 1$. We ran the Fourier PDE eigensolver for maximum frequencies $1 \leq \omega_\text{max} \leq 32$, and Conjugate Gradient (CG) and inverse power iteration tolerances $\tau_\text{cg} = \tau_\text{inv} = 10^{-15}$.

\begin{figure}
    \begin{minipage}{0.48\textwidth}
        \centering
        \includegraphics[width=\linewidth]{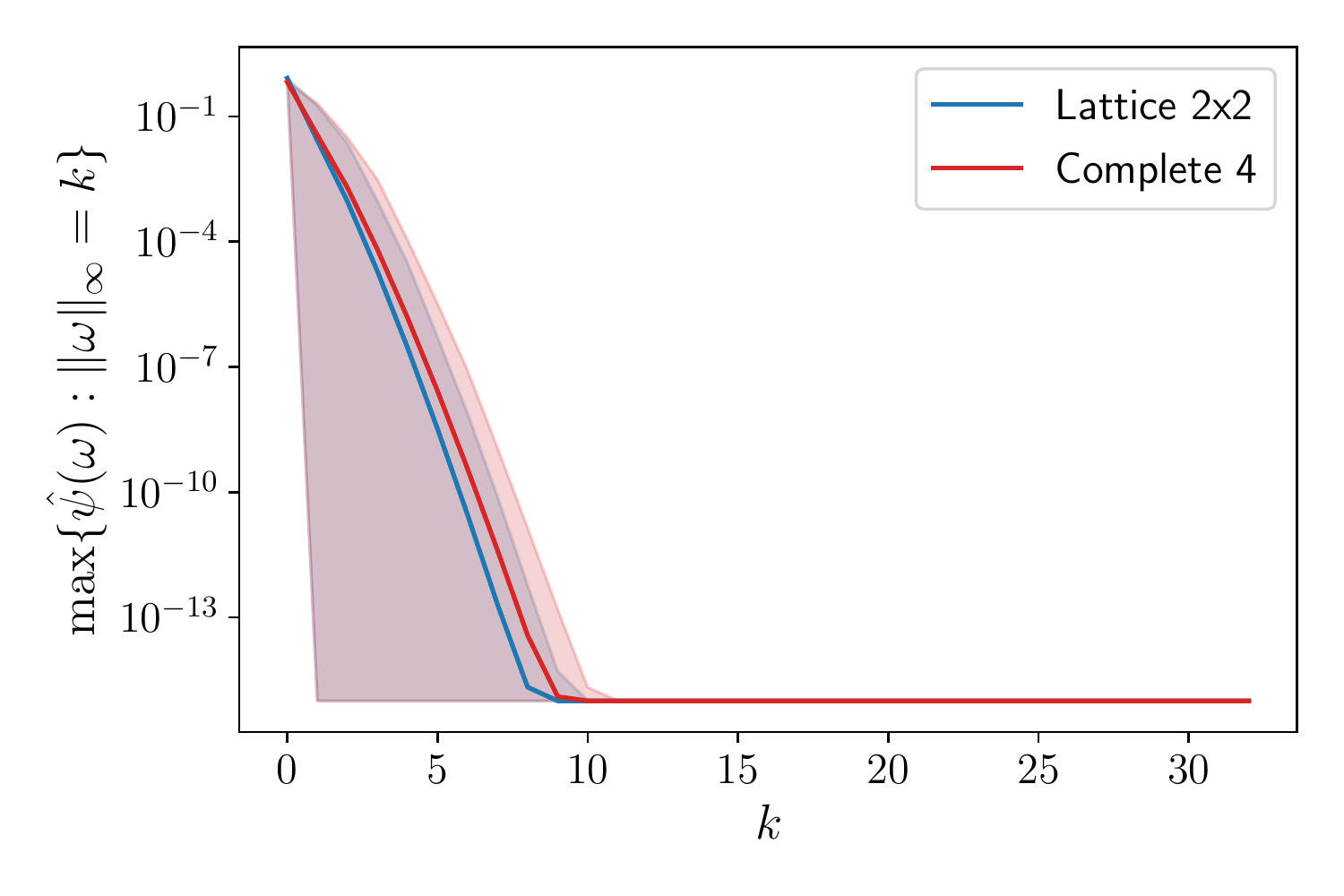}
        \caption{Decay of the ground state amplitudes with increasing frequency corresponding to the four-rotor networks computed from $\omega_\text{max} = 32$ discretization of the Fourier PDE eigensolver.}
        \label{fig:state-amplitude-decay}
    \end{minipage}
    ~
    \begin{minipage}{0.48\textwidth}
        \centering
        \includegraphics[width=\linewidth]{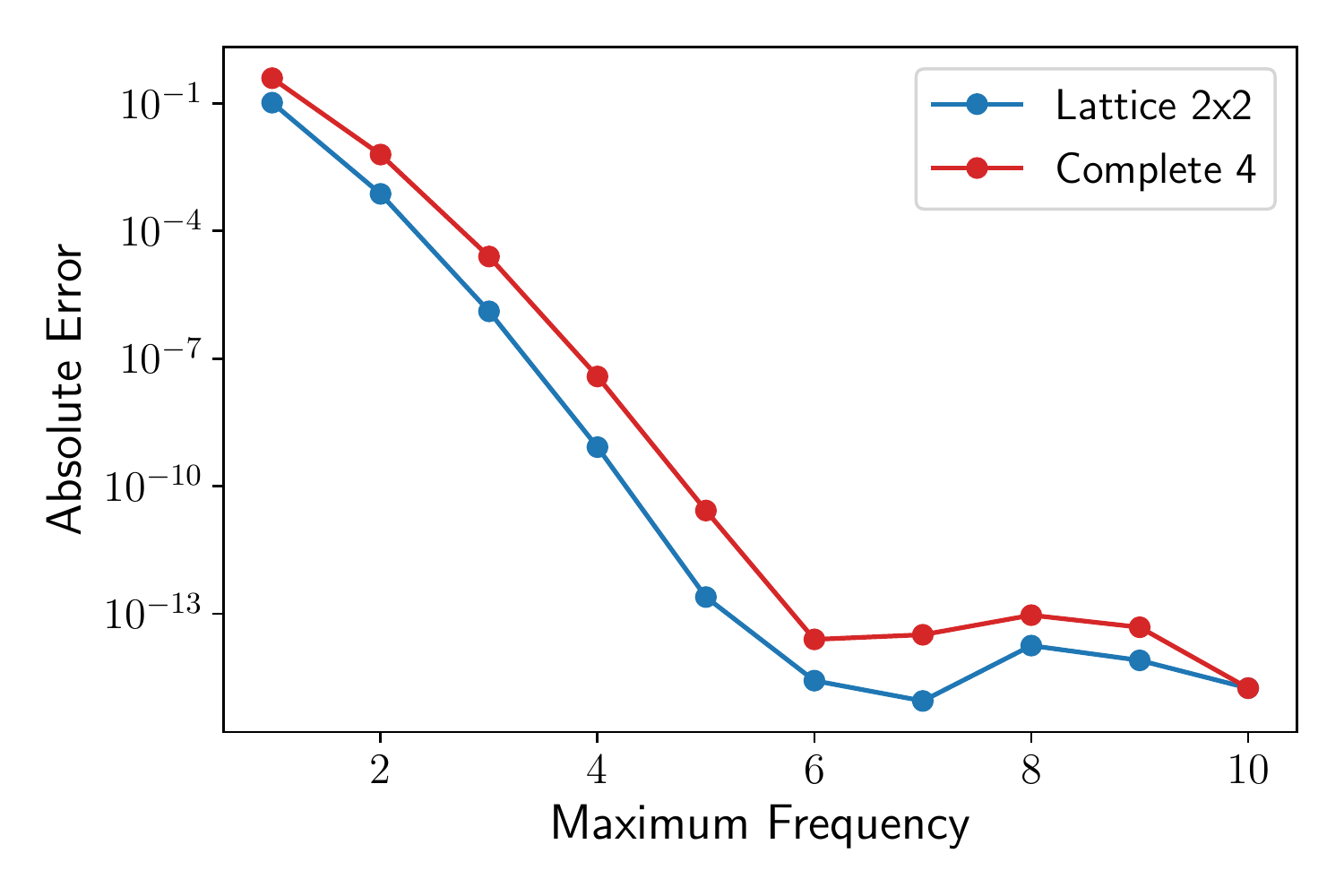}
        \caption{Spectral decay of error in the minimum eigenvalue estimate for the four-rotor networks with increasing maximum frequency. The error is estimated using the ground state computed from $\omega_\text{max} = 32$ discretization.}
        \label{fig:spectral-decay}
    \end{minipage}
\end{figure}

We plot the state amplitude for both networks corresponding to the $\omega_\text{max} = 32$ discretization in Figure~\ref{fig:state-amplitude-decay}; as we can see, the amplitudes vanish for approximately $\Vert \omega \Vert_\infty \geq 10$. This leads to spectral convergence of our Fourier PDE eigensolver; we demonstrate this in Figure~\ref{fig:spectral-decay}, where we plot the error in the ground state energy (using $\omega_\text{max} = 32$ solution as reference) as we increase the maximum frequency. The error decays exponentially until it reaches machine precision.

\begin{figure}
    \begin{minipage}{0.48\textwidth}
        \centering
        \includegraphics[width=\linewidth]{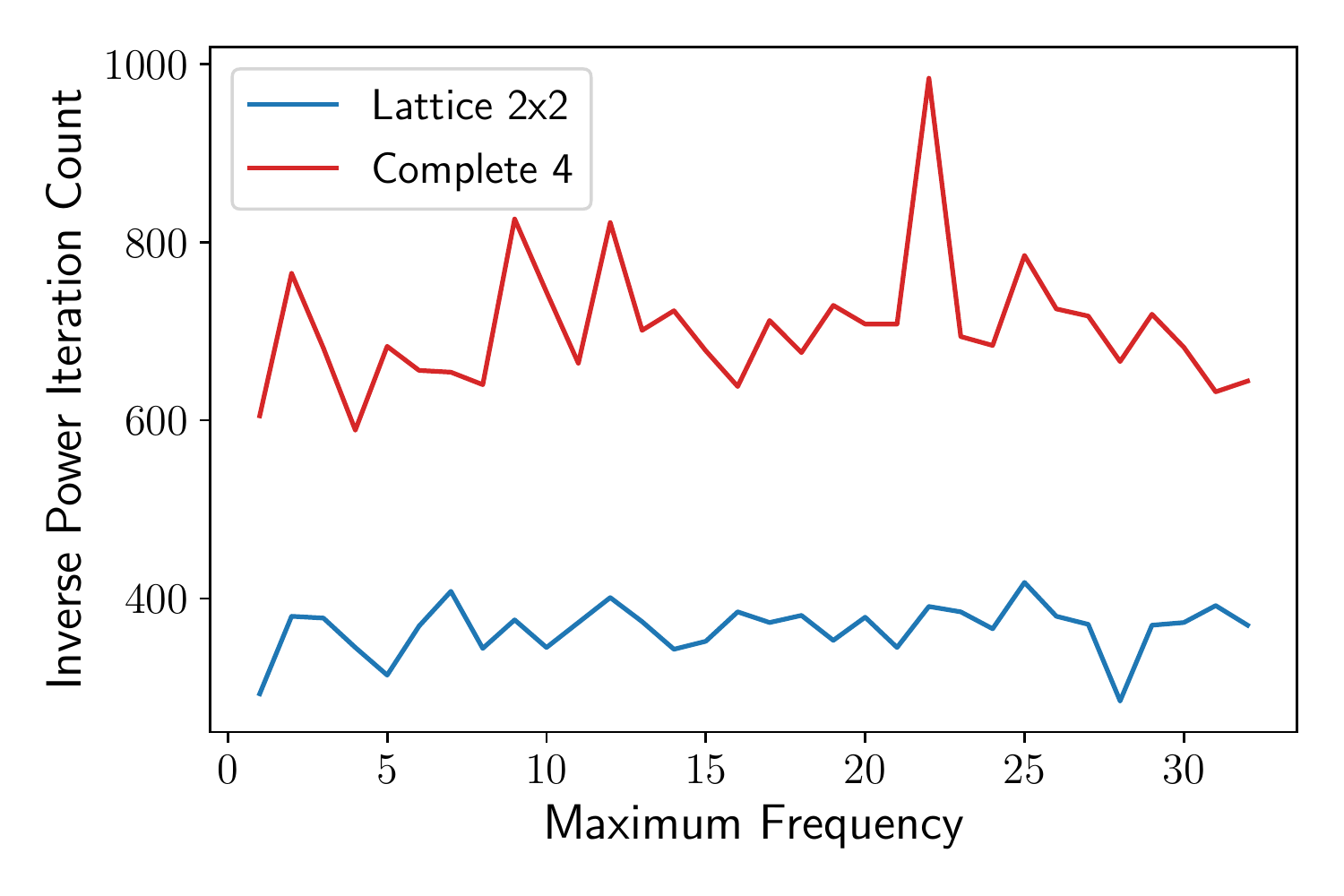}
        \caption{Number of inverse power iterations required before convergence.}
        \label{fig:num-inv-power-iter}
    \end{minipage}
    ~
    \begin{minipage}{0.48\textwidth}
        \centering
        \includegraphics[width=\linewidth]{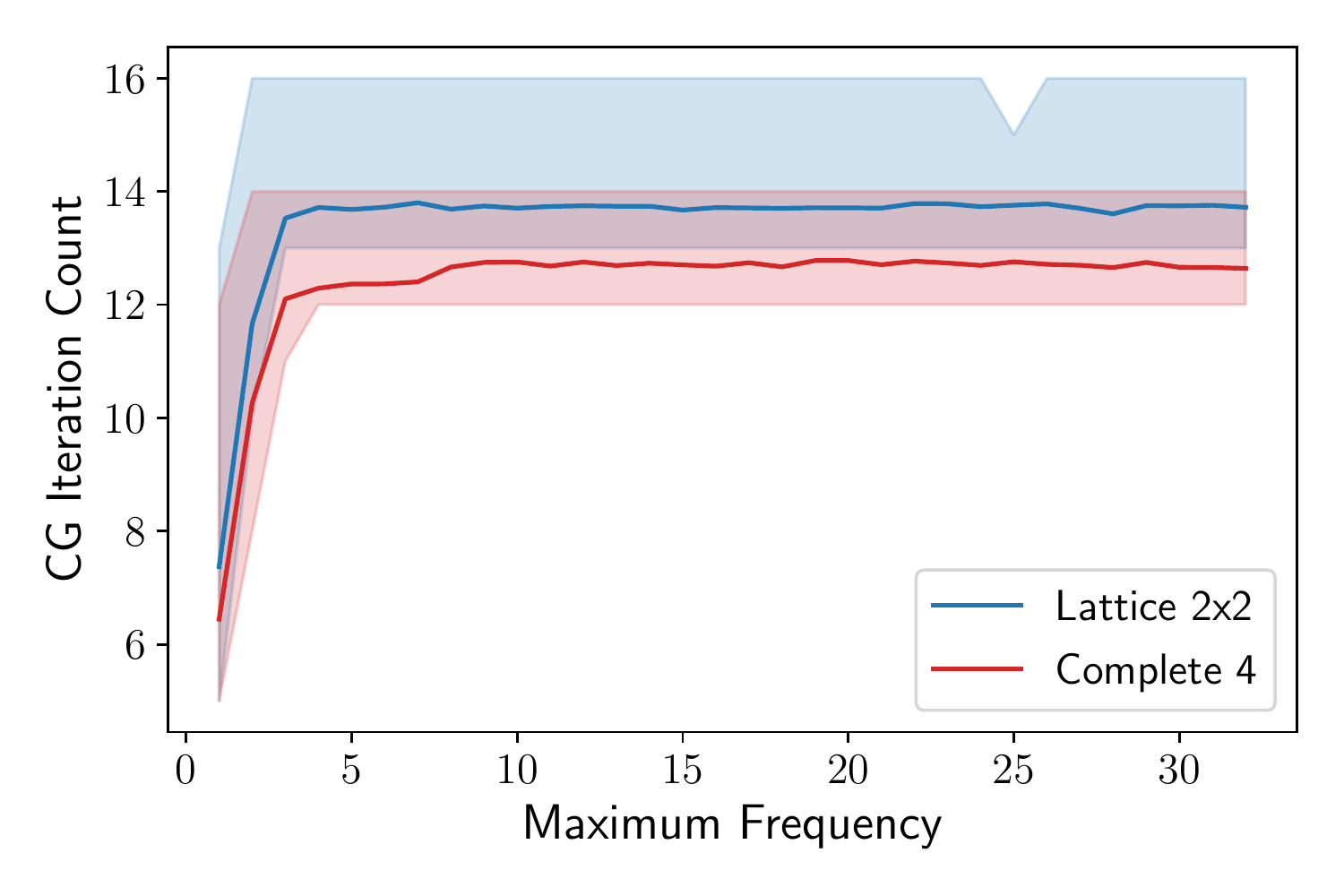}
        \caption{Average number of CG iterations per inverse power iteration; shaded region depict the range of the number of CG iterations.}
        \label{fig:num-cg-iter}
    \end{minipage}
\end{figure}

In Figure~\ref{fig:num-inv-power-iter} we plot the number of inverse power iterations necessary for convergence; we note that the iteration count is largely independent of the maximum frequency discretization parameter. Additionally, from Figure~\ref{fig:num-cg-iter} we see that with our preconditioner for the Fourier problem, the number of CG iterations per inverse iteration plateaus very quickly.

\subsection{VMC Simulations}

We ran VMC simulations with the same four-rotor Hamiltonians. In these simulations, 20 hidden nodes were used to construct the restricted Boltzmann machine and 10000 stochastic gradient descent steps were performed with a learning rate of $10^{-2}$. During each of these steps 24000 Metropolis-Hastings samples were generated, the first 4000 of which were discarded as burn-in and then every 20th sample were cherry-picked to compute the expected value quantities (e.g.\ energy and gradient). A stochastic reconfiguration parameter of $10^{-6}$ was selected for the simulations. { For details on the stochastic reconfiguration algorithm we refer the reader to the appendices of \cite{carleo2017solving}.}

\begin{figure}
    \centering
    \includegraphics[width=0.48\textwidth]{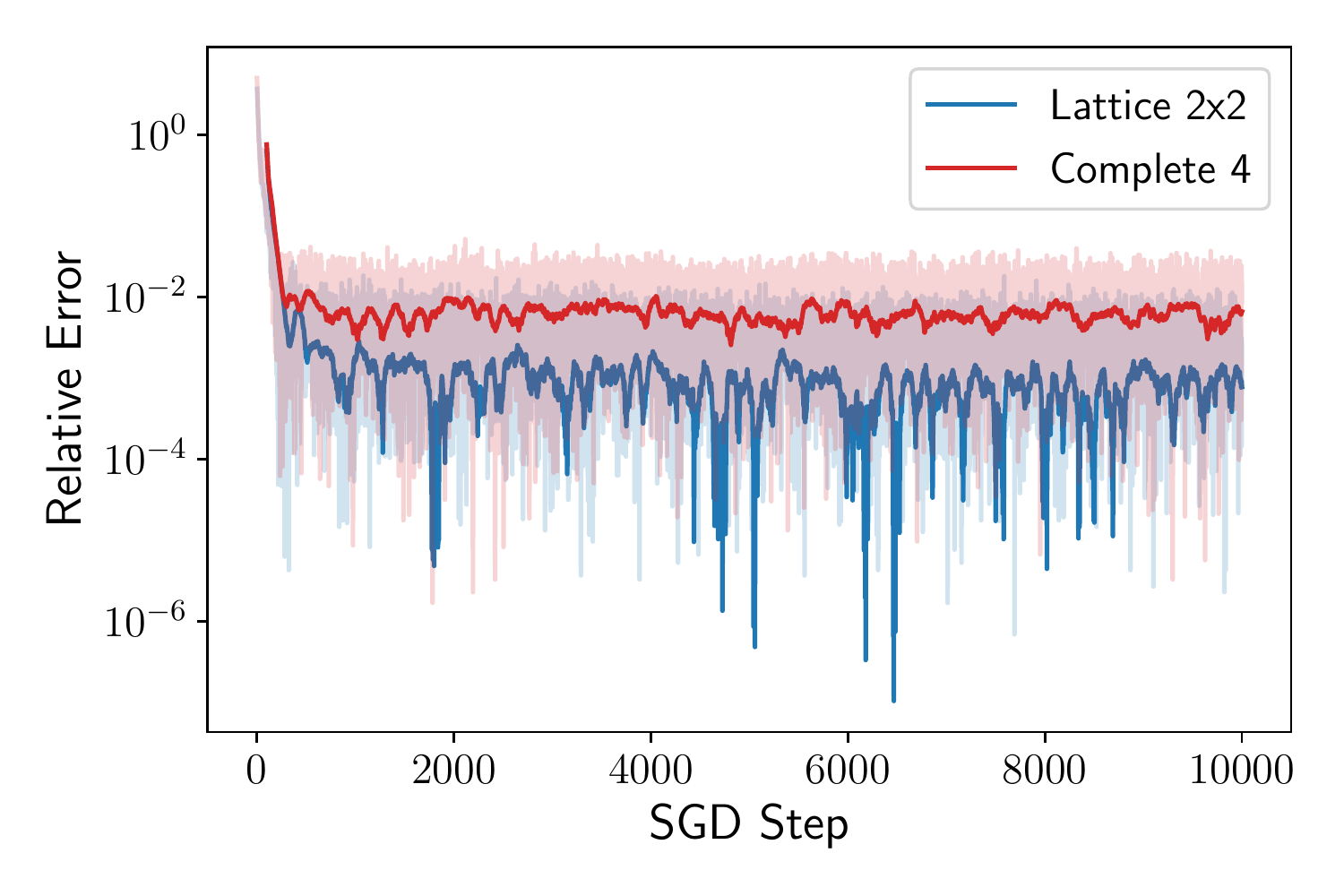}
    \caption{Evolution of relative error in the energy estimate (taking the energy computed using the PDE eigensolver as reference) during the stochastic gradient descent in the VMC simulations for four-rotor networks. The solid line represents the relative error when using 250-point rolling average to compute the VMC energy estimate.}
    \label{fig:four-rotor-vmc}
\end{figure}

In Figure~\ref{fig:four-rotor-vmc}, we demonstrate the convergence of the energy estimates for the ground state of the corresponding Hamiltonians over the optimization process.

\subsection{Comparison of PDE and VMC Solvers}

\begin{table}
  \renewcommand*{\arraystretch}{1.2}
  \centering
  \begin{tabular}{r r r r r r r r r}
    \toprule
    \multirow{2}{*}{\# Rotors ($n$)} & \multicolumn{2}{c}{PDE ($\omega_\text{max} = 5$)} & \multicolumn{2}{c}{PDE ($\omega_\text{max} = 7$)} & \multicolumn{3}{c}{VMC} & \multicolumn{1}{c}{Jastrow} \\
    \cmidrule(lr){2-3} \cmidrule(lr){4-5} \cmidrule(lr){6-8} \cmidrule(lr){9-9} & $\lambda_\text{min}$ & $T_\text{elapsed}$ & $\lambda_\text{min}$ & $T_\text{elapsed}$ & $\text{avg}$ & $\text{std}$ & $T_\text{elapsed}$ & $R_H(\psi_{\rm J})$ \\
    \midrule
    2 & $1.625$ &    $0.04$ & $1.625$ &    $0.04$ & $1.625$ & $1.2 \times 10^{-3}$ &  $364$ & 1.627 \\
    3 & $3.235$ &    $0.2$ & $3.235$ &    $0.2$ & $3.235$ & $2.1 \times 10^{-2}$ &  $721$ & 3.254 \\
    4 & $4.844$ &    $1$ & $4.844$ &    $4$ & $4.844$ & $4.4 \times 10^{-2}$ & $2029$ & 4.882 \\
    5 & $6.452$ &   $20$ & $6.452$ &  $116$ & $6.454$ & $6.6 \times 10^{-2}$ & $2667$ & 6.509 \\
    6 & $8.059$ &  $392$ & $8.059$ & $3768$ & $8.058$ & $8.9 \times 10^{-2}$ & $3934$ & 8.136 \\
    7 & $9.667$ &  $894$ & $9.667$ & $9579$ & $9.669$ & $1.0 \times 10^{-1}$ & $5573$ & 9.763 \\
    \bottomrule
  \end{tabular}
  \vspace{1em}
  \caption{Comparison of the PDE and VMC based solvers for the quantum rotor model one a linear graph. The PDE eigensolvers were run with CG and inverse power iteration tolerances $\tau_\text{cg} = \tau_\text{inv} = 10^{-15}$ and two discretization settings, $\omega_\text{max} = 5$ and $\omega_\text{max} = 7$, were used to ensure convergence to the smallest eigenvalue. 16 cores were utilized for the $n = 7$ rotor PDE simulations, as such the reported wall-time should be multiplied by a factor of 16 to obtain the actual CPU compute-time. The VMC eigensolvers were run for $10000$ stochastic gradient descent steps with learning rate $10^{-2}$, $5n$ hidden RBM nodes, $6000n$ Metropolis-Hastings samples per gradient descent step ($1000n$ initial samples were discarded as burn-in and then every $5n$ samples were cherry-picked). All elapsed times are in seconds. {For the VMC, avg and std refer to the average and standard deviation of the energy computed in the optimized RBM state.} The last column shows the optimal energy $1.62718 (n-1)$ of the Jastrow wavefunction \eqref{e:jastrow} assuming uniform weights $w_{ij}$.}
  \label{tab:pde-vmc}
\end{table}

In Table~\ref{tab:pde-vmc}, we compare the performances of the PDE and VMC solvers on linear chain networks. For each $2 \leq n \leq 7$, we ran inverse power iteration with tolerances $\tau_\text{inv} = 10^{-15}$ and $\tau_\text{cg} = 10^{-15}$ for the eigenproblem in Fourier domain with two discretizations: $\omega_\text{max} = 5$ and $\omega_\text{max} = 7$. This was done to ensure that the PDE eigensolver converged to the correct ground state energy. We then compared these reference eigenvalues against those obtained using the VMC simulations; as we can clearly see from the table, the eigenvalues match to one significant digit. We also note the mild increase in elapsed time in VMC simulations as the number of rotors is increased, compared to the exponential increase in the PDE eigensolver time. This implies that using VMC, we will be able to solve much larger problems than using the PDE eigensolver.

\subsection{VMC State Parametrization}

\begin{table}
    \centering
    \begin{tabular}{c c c c c c c}
        \toprule
        \multirow{2}{*}{\# Hidden nodes} & \multirow{2}{*}{Metric} & \multicolumn{4}{c}{SGD Step} & \multirow{2}{*}{Average Energy}                \\
        \cmidrule(lr){3-6}
                             &             & 100                  & 500                  & 5000                 & 9999                                                           \\
        \midrule
        \multirow{2}{*}{20}  & std.\ dev.  & $4.4 \times 10^{-1}$ & $1.9 \times 10^{-1}$ & $1.7 \times 10^{-1}$ & $1.7 \times 10^{-1}$ & \multirow{2}{*}{$1.4489 \times 10^{1}$} \\
                             & grad.\ norm & $1.1 \times 10^{-2}$ & $7.2 \times 10^{-4}$ & $1.3 \times 10^{-3}$ & $4.4 \times 10^{-3}$                                           \\
        \cmidrule(lr){1-7}
        \multirow{2}{*}{40}  & std.\ dev.  & $4.1 \times 10^{-1}$ & $1.8 \times 10^{-1}$ & $1.7 \times 10^{-1}$ & $1.5 \times 10^{-1}$ & \multirow{2}{*}{$1.4488 \times 10^{1}$} \\
                             & grad.\ norm & $1.3 \times 10^{-2}$ & $3.7 \times 10^{-3}$ & $6.5 \times 10^{-3}$ & $3.5 \times 10^{-3}$                                           \\
        \cmidrule(lr){1-7}
        \multirow{2}{*}{60}  & std.\ dev.  & $3.6 \times 10^{-1}$ & $2.0 \times 10^{-1}$ & $1.6 \times 10^{-1}$ & $1.6 \times 10^{-1}$ & \multirow{2}{*}{$1.4495 \times 10^{1}$} \\
                             & grad.\ norm & $1.7 \times 10^{-2}$ & $3.9 \times 10^{-3}$ & $5.0 \times 10^{-3}$ & $2.7 \times 10^{-3}$                                           \\
        \cmidrule(lr){1-7}
        \multirow{2}{*}{80}  & std.\ dev.  & $4.1 \times 10^{-1}$ & $1.9 \times 10^{-1}$ & $1.5 \times 10^{-1}$ & $1.5 \times 10^{-1}$ & \multirow{2}{*}{$1.4493 \times 10^{1}$} \\
                             & grad.\ norm & $5.9 \times 10^{-3}$ & $2.2 \times 10^{-2}$ & $1.7 \times 10^{-3}$ & $2.4 \times 10^{-3}$                                           \\
        \cmidrule(lr){1-7}
        \multirow{2}{*}{100} & std.\ dev.  & $4.4 \times 10^{-1}$ & $1.9 \times 10^{-1}$ & $1.5 \times 10^{-1}$ & $1.5 \times 10^{-1}$ & \multirow{2}{*}{$1.4496 \times 10^{1}$} \\
                             & grad.\ norm & $5.0 \times 10^{-4}$ & $1.8 \times 10^{-4}$ & $6.6 \times 10^{-4}$ & $1.0 \times 10^{-5}$                                           \\
        \bottomrule
    \end{tabular}
    \caption{Evolution of the energy standard deviation and gradient norm in the VMC simulations with ten-rotor chain network as we vary the number of hidden units. The VMC eigensolvers were run for $10000$ stochastic gradient descent steps with learning rate of $10^{-2}$ and $120000$ Metropolis-Hastings samples per gradient descent step ($20000$ samples were discarded as burn-in and then every 100th sample were cherry-picked). The last column reports the converged energy for each of these models.}
    \label{tab:ten-rotors-density}
\end{table}

The main advantage of the VMC approach over the PDE approach is the ability to parametrize the infinite-dimensional quantum rotor state using a finite-dimensional manifold of parameters. The dimensionality of the manifold is dependent on the number of hidden nodes; as the number of hidden nodes is increased, we are able to capture increasingly complicated quantum states. Note however, this also increases the number of parameters to learn (on top of increasing the computational complexity), and this may lead to poor performance if the model is not trained for long enough. In Table~\ref{tab:ten-rotors-density}, we record the eigenvalue standard deviation and gradient norm of a ten-rotor chain quantum rotor system with various number of hidden RBM nodes during various stages of the training process. We note that after 100 stochastic gradient descent steps, the RBM model with 60 hidden nodes has the smallest standard deviation in the eigenvalue; however as the training continues, all of the models end up with comparable converged eigenvalue.

\section{Conclusion and Future Directions}

We introduced continuous-variable neural quantum states as a variational ansatz for finding the ground-states of quantum Hamiltonian operators on continuous manifolds. We demonstrated the ability of these neural states to converge to the minimal eigenpair of the rotor Hamiltonian by comparing the obtained eigenvalue against those obtained using a baseline PDE based eigensolver. We observed that the scalability of our variational solver increases far slower when compared to the PDE eigensolver as the number of dimensions increase.

The baseline PDE eigensolver introduced in this paper leverages simple techniques from scalable scientific computing algorithms. While the implementation supports distributed computing, allowing us to scale beyond the memory limits of a single node, it does not address the issues related to the curse of dimensionality. Tensor factorization techniques can be used to compress the quantum state of high-dimensional systems to enable memory efficient computing. In \cite{beylkin2002numerical, beylkin2005algorithms}, the authors use the canonical polyadic (CP) decomposition to find the ground state of a Hamiltonian similar to the one we consider and in \cite{oseledets2011tensor}, the author uses the tensor-train (TT)/matrix-product state (MPS) decomposition on the same problem. \cite{rakhuba2016calculating, veit2017using} use the TT decomposition to find the vibrational spectra and ground states of molecules.

A line of investigation promising improved scalability of the VMC is the exploitation of symmetries of the interaction graph $G=(\mathcal{V},\mathcal{E})$. Although simple convolutional architectures  are likely sufficient for square grid graphs with discrete translational symmetry, a detailed investigation of the interplay between the automorphism group of $G=(\mathcal{V},\mathcal{E})$ and the isometry group of the target space $(M,g)$ would be desirable.

If the first quantized approach of this paper can ultimately be made to scale, then it should become possible to analyze the quantum phase transitions corresponding to the Berezinskii–Kosterlitz–Thouless (BKT) transition via the quantum classical mapping. In two (Euclidean) dimensions the BKT topological phase transition is a well-known consequence of the proliferation of vortices. Can systematically improvable variational wavefunctions cast light on the nature of the corresponding quantum phase transition? Similar techniques adapted to variational real-time dynamics could potentially offer a window into many-body quantum chaos via disordered rotor models \cite{cheng2019chaos} or the models with negative curvature \cite{gubser2016nonlinear}.

In closing, the approximation properties of continuous-variable neural-network quantum states is poorly understood. Considerable effort has been expended in the search for exact representations of many-body quantum states using restricted Boltzmann machines \cite{huang2017neural,deng2017machine,carleo2018constructing,chen2018equivalence,rrapaj2021exact,pei2021compact} and it would be very interesting if similiar techniques can be adapted to continuous-variable lattice systems.

\section{Acknowledgements}

J.S.~thanks Di Luo and Tobias Osborne for discussion and encouragement. Authors gratefully acknowledge support from NSF under grant DMS-2038030. This research was supported in part through computational resources and services provided by the Advanced Research Computing (ARC) at the University of Michigan.

\appendix
\clearpage

\section{Nonlinear Sigma Model Regularization}\label{sec:sigma}

A nonlinear sigma model is a theory of maps from a spacetime manifold to the target Riemannian manifold $(M,g)$. In this appendix we focus on Minkowski space $\mathbb{R}^{n,1}$ which is topologically the product $\mathbb{R}^{n} \times \mathbb{R}$. The following discussion can be easily adapted to replace $\mathbb{R}^n$ with a manifold of finite spatial volume such as the  $n$-torus. The degrees of freedom of the theory consist of target-space-valued functions of spacetime, which we denote by $x^a = x^a(\sigma,\tau)$, where $a = 1,\ldots,\dim M$ indexes the implicit coordinates for the target manifold $M$. Classical trajectories of the sigma model satisfy a nonlinear wave equation obtained by extremizing the following functional,
\begin{align}\label{e:action}
    S[x]
    & =
    \frac{1}{2f } \int_{\mathbb{R}}
    {\rm d}\tau
    \int_{\mathbb{R}^n}
    {\rm d}^{n} \sigma
    \,
    g_{ab}(x) \left[  \dot{x}^a \dot{x}^b - (\vec{\nabla} x^a) \cdot (\vec{\nabla} x^b)\right] \enspace ,
\end{align}
where $f  > 0$ is an overall normalization.
Define the momentum density as the following functional derivative,
\begin{equation}
    \pi_a := \frac{\delta S}{\delta \dot{x}^a} = \frac{1}{f } g_{ab}(x) \dot{x}^b \enspace ,
\end{equation}
in terms of which the energy density $\mathcal{H} := \pi_a \dot{x}^a - \mathcal{L}$ is given by
\begin{align}
    \mathcal{H}
        = \frac{f }{2} g^{ab} \pi_a \pi_b + \frac{1}{2f } g_{ab} (\vec{\nabla} x^a) \cdot (\vec{\nabla} x^b) \enspace ,
\end{align}
where $\mathcal{L}$ is the Lagrangian density.
In the simplest Euler discretization, the total energy $E(\tau) = \int {\rm d}^n \sigma \, \mathcal{H}$ at time $\tau$ of a map $x^a = x^a(\sigma,\tau)$ can be obtained from the limiting procedure $E(\tau) = \lim_{a \to 0} E_a(\tau)$,
\begin{align}\label{e:lattticeenergy}
    E_a(\tau)
        & =
        \sum_{\sigma \in a\mathbb{Z}^n}
        \left[ \frac{f }{2a^n} g^{ab}\big(x(\sigma,\tau)\big) p_a(\sigma,\tau) p_b(\sigma,\tau) + \frac{a^{n-2}}{2f }  \sum_{\mu=1}^n g_{ab}\big(x(\sigma,\tau)\big) \delta_\mu x^a(\sigma,\tau) \delta_\mu x^b(\sigma,\tau)\right] \enspace ,
\end{align}
where $\delta_\mu x^a(\sigma,\tau): = x^a(\sigma+a \,\hat{e}_\mu,\tau) - x^a(\sigma,\tau)$ is the finite difference operator in the direction of the unit basis vector $\hat{e}_\mu \in \mathbb{R}^n$, and $p_a(\sigma,\tau) := a^d \, \pi_a(\sigma,\tau)$ is the momentum at vertex $\sigma \in a \mathbb{Z}^n$. The double summation runs over the edges of the cubic lattice graph. Observe that the  potential $g_{ab} \delta_\mu x^a \delta_\mu x^b$ corresponding to each edge of the graph is the quadratic approximation to the squared Riemannian distance $\dist^2(x(\sigma), x(\sigma + a \, \hat{e}_\mu))$, which provides an accurate approximation when the involved distances are much less than the radius of curvature of the target space $(M,g)$. The form of the energy \eqref{e:lattticeenergy} motivates a quantum Hamiltonian in which the quadratic potential is replaced by its nonlinear completion. It follows from the requirement of self-adjointness of the conjugate momentum operator, that the kinetic term of the quantum Hamiltonian is obtained by replacing $g^{ab} p_a p_b$ with minus the Laplace-Beltrami operator defined as follows
\begin{equation}
    \Delta = \frac{1}{\sqrt{g}} \frac{\partial}{\partial x^a}\left( \sqrt{g} g^{ab} \frac{\partial}{\partial x^b}\right) \enspace ,
\end{equation}
where $g=\det g_{ab}$. The resulting Hamiltonian is of the form \eqref{e:hamiltonian} with $V(r) = r^2$ and with uniform vertex and edge weights given by $h_i = f / a^n$ and $\beta_{ij} = a^{n-2}/(2f )$, respectively.

\section{Training the VMC} \label{sec:training}

The components of the variational derivatives are given by
\begin{align}
	\frac{\partial \log \psi(x)}{\partial \vec{c}_j}
		& = \vec{x}_j \enspace , \\
	\frac{\partial\log \psi(x)}{\partial \vec{b}_i}
		& = g(r_i ) \frac{\vec{y}_i}{r_i} \enspace , \\
	\frac{\partial\log \psi(x}{\partial w_{ij}}
		& = g(r_i ) \frac{\langle \vec{y}_i, \vec{x}_j \rangle}{r_i} \enspace ,
\end{align}
where $g(x) := I_1(x) / I_0(x)$ and $r_i := \Vert y_i \Vert$.
Each time a Metropolis update of the $j$th visible rotor occurs of the form $\vec{x}_j \gets \vec{x}'_j$, the variable $r_i$ is updated for all $i\in[m]$ according to the following rule:
\begin{align}
	\vec{y}_i
		& \gets y_i + w_{ij}(\vec{x}'_j-\vec{x}_j) \enspace , \\
	r_i
		& \gets \Vert \vec{y}_i \Vert \enspace .
\end{align}

In our quantum rotor setup, the vectors $\vec{x}_i \in S^1$ are parametrized by $\theta_i \in [-\pi, \pi)$ as $\vec{x}_i = (\cos\theta_i, \sin\theta_i)$. The Metropolis update is then simply updating this angle parameter
\begin{equation}
    \theta_i' = \theta_i + \delta_i, \quad \delta_i \sim \text{Uniform}(-a, a)
\end{equation}
while shifting $\theta_i'$ by multiples of $2\pi$ to keep it in the range $[-\pi, \pi)$. Here $a > 0$ is another hyperparameter for the training.

\section{Exact Energy for Jastrow Wavefunction}\label{sec:jastrow}

In this appendix we analytically compute the Rayleigh quotient $R_H(\psi_{\rm J})$ for the quantum rotor Hamiltonian on a linear graph,
\begin{equation}
    H = -\frac{h}{2} \sum_{i=1}^n \frac{\partial^2}{\partial \theta_i^2} + \sum_{i=1}^{n-1} \beta_i \big[2-2\cos(\theta_i - \theta_{i+1})\big] \enspace ,
\end{equation}
using a generalized Jastrow trial wavefunction of the form
\begin{equation}
    \psi_J(\theta_1,\ldots,\theta_n) = \frac{1}{\sqrt{2\pi}} \prod_{i=1}^{n-1} \varphi_i(\theta_i - \theta_{i+1}) \enspace ,
\end{equation}
where for each $1 \leq i \leq n-1$, the function $\varphi_i : \mathbb{R} \to \mathbb{R}$ is even, $2\pi$-periodic and satisfies the following normalization condition,
\begin{equation}
    \int_0^{2\pi} {\rm d}\theta \, \varphi_i(\theta)^2 = 1 \enspace ,
\end{equation}
which ensures overall normalization
\begin{equation}
    \int_{[0,2\pi]^n}{\rm d}\theta_1\cdots{\rm d}\theta_n \, \psi_{\rm J}(\theta_1,\ldots,\theta_n)^2 = 1 \enspace .
\end{equation}
If we define constant functions $\varphi_0(\theta) =\varphi_n(\theta) = 0$, then
\begin{align}
    \left\langle \psi_{\rm J} , \frac{\partial^2\psi_{\rm J}}{\partial \theta_i^2} \right\rangle
    & := \int_{[0,2\pi]^n}{\rm d}\theta_1\cdots{\rm d}\theta_n \, \psi_{\rm J}(\theta_1,\ldots,\theta_n) \frac{\partial^2}{\partial \theta_i^2} \psi_{\rm J}(\theta_1,\ldots,\theta_n) \enspace , \\
    & = \int_0^{2\pi} {\rm d}\theta \left[\varphi_{i-1}(\theta) \varphi_{i-1}''(\theta) + \varphi_i(\theta) \varphi_i''(\theta)\right] - 2 \left[\int_0^{2\pi} {\rm d}\theta \varphi_{i-1}(\theta)\varphi_{i-1}'(\theta)\right]\left[\int_0^{2\pi} {\rm d}\bar{\theta} \, \varphi_{i}(\bar{\theta})\varphi_{i}'(\bar{\theta})\right] \enspace , \\
    & = \int_0^{2\pi} {\rm d}\theta \left[\varphi_{i-1}(\theta) \varphi_{i-1}''(\theta) + \varphi_i(\theta) \varphi_i''(\theta)\right] \enspace ,
\end{align}
where in the last equality we used the fact that $\varphi_i'$ is odd since $\varphi_i$ is even. Similarly,
\begin{equation}
    \langle \psi_{\rm J}, \cos(\theta_i - \theta_{i+1}) \psi_{\rm J} \rangle
        = \int_{0}^{2\pi} \varphi_i(\theta) \cos\theta \enspace ,
\end{equation}
and thus the total energy in the state $\psi_{\rm J}$ is
\begin{equation}
    \langle \psi_{\rm J}, H \psi_{\rm J} \rangle = \sum_{i=1}^{n-1} \left[2\beta_i-\int_0^{2\pi} {\rm d}\theta \, \varphi_{i}(\theta) \big( h\, \varphi_i''(\theta) + 2\beta_i \cos\theta\big) \right] \enspace .
\end{equation}
In the particular case of Eq.~\eqref{e:jastrow} of the main text we have,
\begin{equation}
    \varphi_i(\theta) = \frac{\exp(w_i \cos\theta)}{\sqrt{2\pi I_0(2w_i)}} \enspace ,
\end{equation}
and the energy is found to be
\begin{equation}
    \langle \psi_{\rm J}, H \psi_{\rm J} \rangle = \sum_{i=1}^{n-1}\left[2\beta_i + \frac{I_1(2w_i)}{I_0(2w_i)}\left(\frac{h}{2}w_i - 2 \beta_i\right) \right] \enspace .
\end{equation}

\section{Preconditioned Fourier Eigensolver} \label{sec:Fourier}

Here, we provide details on the Fourier-based spectral method for solving \eqref{eq:schrodinger}.
Substituting the Fourier expansion \eqref{eq:fourier} in the eigenvalue problem \eqref{eq:schrodinger}, multiplying both sides by $\exp(i \vb*{\omega} \cdot \vb*{\theta})$ and integrating over the domain $[0, 2\pi)^n$, we obtain
\begin{equation}
    \label{eq:fourier-eigenstate}
    \frac{h}{2} \Vert \vb*{\omega} \Vert^2 \hat{\psi}(\vb*{\omega}) + \sum_{\{ i,j \} \in \CE} \beta_{ij} [2 \hat{\psi}(\vb*{\omega})  - \hat{\psi}(\vb*{\omega} + \vb*{e}_i - \vb*{e}_j) - \hat{\psi}(\vb*{\omega} - \vb*{e}_i + \vb*{e}_j)] = \lambda \hat{\psi}(\vb*{\omega}) \enspace , \quad\quad \vb*{\omega} \in \DZ^n \enspace ,
\end{equation}
where $\vb*{e}_i \in \DR^n$ is the vector of all zeros except at the $i$-th entry, which is $1$. Clearly, \eqref{eq:fourier-eigenstate} is an eigenvalue equation for the Fourier coefficients
\begin{equation}
    \hat{H} \hat{\psi}(\vb*{\omega}) = \lambda \hat{\psi}(\vb*{\omega}) \enspace ,
\end{equation}
where the operator $\hat{H}$ defined as
\begin{equation}
    \label{eq:schrodinger-fourier}
    \hat{H} \hat{\psi}(\vb*{\omega}) := \frac{h}{2} \Vert\vb*{\omega}\Vert^2 \hat{\psi}(\vb*{\omega}) + \sum_{\{ i,j \} \in \CE} \beta_{ij} [2 \hat{\psi}(\vb*{\omega})  - \hat{\psi}(\vb*{\omega} + \vb*{e}_i - \vb*{e}_j) - \hat{\psi}(\vb*{\omega} - \vb*{e}_i + \vb*{e}_j)] \enspace .
\end{equation}
Note that given $\psi \in L^2([0, 2\pi)^n)$ with periodic weak derivatives, the Fourier coefficients $\hat{\psi} \in L^2(\DZ^d)$ satisfy
\begin{equation}
    \label{eq:fourier-decay-condition}
    \sum_{\vb*{\omega} \in \DZ^d} \Vert \vb*{\omega} \Vert^2 \lvert \hat{\psi}(\vb*{\omega}) \rvert < \infty \enspace .
\end{equation}
This condition \eqref{eq:fourier-decay-condition} suggests that the Fourier coefficients decay rapidly as infinity norm of the the frequency vector $\vb*{\omega} = (\omega_1, \ldots, \omega_n)$ increases in size. Thus, setting the Fourier coefficients to zero outside a hypercube in the frequency space
\begin{equation}
    \hat{\psi}(\vb*{\omega}) = 0 \quad \text{if} \quad \Vert \vb*{\omega}\Vert_\infty := \max\{\abs{\omega_1}, \ldots, \abs{\omega_n}\} \geq \omega_\text{max} \enspace ,
\end{equation}
provides a sufficiently accurate approximation to the full wavefunction as long as the cut-off frequency $\omega_\text{max}$ is not too small. This assumption also allows us to represent the wavefunction as a $n$-dimensional $(2 \omega_\text{max} + 1) \times \cdots \times (2 \omega_\text{max} + 1)$ tensor $\hat{\uppsi}$ for computational purposes. Next, restriction $\hat{\mathsf{H}}$ of the Hamiltonian operator $\hat{H}$ to the frequency range $[-\omega_\text{max}, \omega_\text{max}]^n$ is sparse, with at most $2 \abs{\CE} + 1$ non-zero entries per row. This structure leads to efficient matrix-vector operations and minimal inter-node communications in distributed computing setups.

Following this restriction of the eigenstate and the Hamiltonian in the Fourier domain, we need to find the minimal eigenvalue-eigenvector pair corresponding to the linear system
\begin{equation}
    \label{eq:schrodinger-discrete}
    \hat{\mathsf{H}} \hat{\uppsi} = \lambda \hat{\uppsi} \enspace .
\end{equation}
This is equivalent to finding the maximal eigenvalue-eigenvector pair of the system
\begin{equation}
    \label{eq:schrodinger-discrete-inverse}
    \hat{\mathsf{H}}' \hat{\uppsi} = \lambda' \hat{\uppsi} \enspace , \quad \hat{\mathsf{H}}' = (\hat{\mathsf{H}} - \mu \mathsf{I})^{-1} \enspace , \quad \lambda' = (\lambda - \mu)^{-1} \enspace ,
\end{equation}
where $\mu$ is a lower bound on the eigenvalues of $\hat{H}$, implying $\hat{H} - \mu I$ is invertible. Such a lower bound is derived in Appendix~\ref{sc:lower-bound}.

Power iteration provides a simple method for computing the maximal eigenpair of \eqref{eq:schrodinger-discrete-inverse}: starting from an arbitrary initial state $\hat{\uppsi}_0$, we iterate
\begin{equation}
    \hat{\uppsi}_{k + 1} = \frac{\hat{\mathsf{H}}' \hat{\uppsi}_{k}}{\lVert\hat{\mathsf{H}}' \hat{\uppsi}_k\rVert} \enspace , \quad\quad \lambda_{k + 1} = \frac{\hat{\uppsi}_{k + 1} \cdot \hat{\mathsf{H}} \hat{\uppsi}_{k + 1}}{\hat{\uppsi}_{k + 1} \cdot \hat{\uppsi}_{k + 1}} \enspace , \quad\quad k \geq 0 \enspace .
\end{equation}
As long as the minimal eigenpair of $\hat{\mathsf{H}}$ is \emph{non}-degenerate, and the initial state $\hat{\uppsi}_0$ is not orthogonal to the minimal eigenstate, this (inverse) power iteration is guaranteed to converge to the correct answer \cite{trefethen1997numerical}.

Finally, note that computing $\hat{\upphi}_{k + 1} = \hat{\mathsf{H}}' \hat{\uppsi}_k$ in the iteration requires solving a linear system
\begin{equation}
    (\hat{\mathsf{H}} - \mu \mathsf{I}) \hat{\upphi}_{k + 1} = \hat{\uppsi}_k \enspace .
\end{equation}
This can be done by any iterative solvers such as conjugate gradient (CG) or generalized minimal residual (GMRES) methods. They only require a routine for computing $(\hat{\mathsf{H}} - \mu \mathsf{I}) \hat{\uppsi}$ for arbitrary vector $\hat{\uppsi}$ \cite{trefethen1997numerical}. We have already pointed out how the sparsity structure of $\hat{\mathsf{H}}$ can be used to compute the matrix-vector product efficiently.

In our implementation, we choose CG as the iterative linear solver. To accelerate the convergence, we utilize the diagonal preconditioner
\begin{equation}
    \label{eq:preconditioner}
    \hat{\mathsf{M}} \hat{\uppsi}(\vb*{\omega}) = \left(\frac{h}{2} \Vert \vb*{\omega} \Vert^2 + \sum_{\{ i,j \} \in \CE} 2 \beta_{ij} - \mu\right)^{\frac{1}{2}} \hat{\uppsi}(\vb*{\omega}) \enspace ,
\end{equation}
and solve the following linear system
\begin{equation}
    \left[\hat{\mathsf{M}}^{-1} (\hat{\mathsf{H}} - \mu \mathsf{I}) \hat{\mathsf{M}}^{-1}\right] \left[\hat{\mathsf{M}} \hat{\upphi}_{k + 1}\right] = \left[\hat{\mathsf{M}}^{-1} \hat{\uppsi}_k\right] \enspace .
\end{equation}

Algorithm~\ref{alg:inv-power-iter} describes the full inverse power iteration process for determining the ground state of the quantum rotor Hamiltonian.

\begin{algorithm}
    \caption{Inverse power iteration for ground state determination}
    \label{alg:inv-power-iter}
    \begin{algorithmic}[1]
        \Require {Laplacian prefactor $h$, edge weights $\qty{\beta_{ij} : (i, j) \in \mathcal{E}}$, frequency cutoff $\omega_\text{max}$, tolerances $\tau_\text{cg}, \tau_\text{inv}$}
        \Ensure {Ground state $\hat{\uppsi}$ and energy $\lambda$}
        \State {Construct Hamiltonian $\hat{\mathsf{H}}$ using \eqref{eq:schrodinger-fourier} and $\omega_\text{max}$}
        \State {Construct preconditioner $\hat{\mathsf{M}}$ using $\eqref{eq:preconditioner}$ and $\omega_\text{max}$}
        \State {Determine eigenvalue lower bound $\mu$ from \eqref{eq:eigenvalue-lower-bound}}
        \State {Initialize $\uppsi_0 \in \mathbb{R}^{(2 \omega_\text{max} + 1) \times \cdots \times (2 \omega_\text{max} + 1)}$ randomly and normalize}
        \State {Initialize $\lambda_0 \gets \hat{\uppsi}_0 \cdot \hat{\mathsf{H}} \hat{\uppsi}_0$}
        \State {Initialize $k \gets 0$}
        \Repeat
            \State {Solve $(\hat{\mathsf{H}} - \mu \mathsf{I}) \hat{\upphi}_{k + 1} = \hat{\uppsi}_k$ using CG iteration with preconditioner $\hat{\mathsf{M}}$ and tolerance $\tau_\text{cg}$}
            \State {Design new state $\hat{\uppsi}_{k + 1} \gets \hat{\upphi}_{k + 1} / \lVert\hat{\upphi}_{k + 1}\rVert_2$}
            \State {Determine eigenvalue $\lambda_{k + 1} \gets \hat{\uppsi}_{k + 1} \cdot \hat{\upphi}_{k + 1}$}
            \State {$k \gets k + 1$}
        \Until {$\abs{\lambda_k - \lambda_{k - 1}} < \tau_\text{inv}$}
        \State {$\hat{\uppsi} \gets \hat{\uppsi}_k$}
        \State {$\lambda \gets \lambda_k$}
    \end{algorithmic}
\end{algorithm}

\section{Lower Bound on Minimal Eigenvalue} \label{sc:lower-bound}

Here, we derive a simple lower bound on the magnitude of the eigenvalues of the rotor Hamiltonian $H$, and equivalently, the operator $\hat{H}$ in the Fourier basis:

\begin{theorem}
    All eigenvalues $\lambda$ of the Hamiltonian \eqref{eq:schrodinger} satisfy $\lambda \geq \mu$ with
    \begin{equation}
        \label{eq:eigenvalue-lower-bound}
        \mu = -4 \sum_{\{ i,j \} \in \CE} \abs{\beta_{ij}} \enspace .
    \end{equation}
\end{theorem}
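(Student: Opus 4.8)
The plan is to decompose $H$ into its kinetic and potential parts and bound each separately in the sense of quadratic forms. Write $H = T + W$, where $T = -\frac{h}{2}\sum_{i=1}^n \partial^2/\partial\theta_i^2$ is the (nonnegative) Laplacian on the torus $[0,2\pi)^n$ and $W$ is multiplication by the function $W(\vb*{\theta}) = \sum_{\{i,j\}\in\CE}\beta_{ij}\big[2-2\cos(\theta_i-\theta_j)\big]$. Since any eigenvalue $\lambda$ of $H$ equals the Rayleigh quotient $R_H(\psi)$ evaluated at a corresponding eigenfunction, it suffices to show $R_H(\psi) \geq \mu$ for every admissible $\psi$, i.e. that $H \geq \mu I$ as quadratic forms.

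First I would handle the kinetic term. For smooth $2\pi$-periodic $\psi$, integration by parts produces no boundary terms, so $\langle\psi, T\psi\rangle = \frac{h}{2}\sum_{i=1}^n \int_{[0,2\pi)^n} |\partial\psi/\partial\theta_i|^2\,{\rm d}\vb*{\theta} \geq 0$ because $h > 0$; equivalently, in the Fourier representation \eqref{eq:fourier} one has $\langle\psi, T\psi\rangle = \frac{h}{2}\sum_{\vb*{\omega}} \Vert\vb*{\omega}\Vert^2 |\hat{\psi}(\vb*{\omega})|^2 \geq 0$. Hence $T \geq 0$. Next, the potential is bounded below pointwise: for each edge $\{i,j\}$ the factor $2-2\cos(\theta_i-\theta_j)$ lies in $[0,4]$, so $\beta_{ij}\big[2-2\cos(\theta_i-\theta_j)\big] \geq -4|\beta_{ij}|$ irrespective of the sign of $\beta_{ij}$. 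Summing over edges gives $W(\vb*{\theta}) \geq -4\sum_{\{i,j\}\in\CE} |\beta_{ij}| = \mu$ for all $\vb*{\theta}$, hence $\langle\psi, W\psi\rangle \geq \mu\langle\psi,\psi\rangle$.

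Combining the two bounds, $\langle\psi, H\psi\rangle = \langle\psi, T\psi\rangle + \langle\psi, W\psi\rangle \geq \mu\langle\psi,\psi\rangle$, so every eigenvalue satisfies $\lambda \geq \mu$. The identical argument applies verbatim to $\hat{H}$ in the Fourier basis \eqref{eq:schrodinger-fourier}, with the diagonal multiplier $\frac{h}{2}\Vert\vb*{\omega}\Vert^2 \geq 0$ playing the role of $T$ and the remaining couplings reassembling into the form of $W$.

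There is essentially no hard step here; the only point meriting a little care is the positivity of $T$, which relies on the absence of boundary terms — guaranteed by the periodicity of the domain — and, for a fully rigorous statement, on working with the Friedrichs extension so that the form domain is the natural Sobolev space $H^1$ of the torus and $W$ is a bounded (indeed continuous and periodic) perturbation, making $H$ self-adjoint and bounded below with the stated bound.
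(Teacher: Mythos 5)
Your proof is correct and follows essentially the same route as the paper's: decompose $H$ into the kinetic term (nonnegative by integration by parts with periodic boundary conditions) and the potential term (bounded below edge-by-edge using $0 \le 2-2\cos(\theta_i-\theta_j)\le 4$), then add the two quadratic-form bounds. Your pointwise lower bound on the potential is a marginally cleaner phrasing of the paper's estimate $\abs{\langle\psi,h_{ij}\psi\rangle}\le 4\abs{\beta_{ij}}\langle\psi,\psi\rangle$, but the argument is the same.
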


\begin{proof}
    Let us denote
    \begin{equation}
        h_{ij}(\vb*{\theta}) = \beta_{ij} (2 - 2 \cos{(\theta_i - \theta_j)}) \implies \abs{h_{ij}(\vb*{\theta})} \leq 4 \abs{\beta_{ij}}
    \end{equation}
    so that we can write
    \begin{equation}
        H = -\frac{h}{2} \Delta + \sum_{\{ i,j \} \in \CE} h_{ij}(\vb*{\theta})
    \end{equation}
    It follows that
    \begin{equation}
        \langle \psi, H \psi \rangle = \frac{h}{2} \langle \psi, -\Delta \psi \rangle + \sum_{\{ i,j \} \in \CE} \langle \psi, h_{ij} \psi \rangle
    \end{equation}
    and we can compute
    \begin{equation}
        \langle \psi, -\Delta \psi \rangle = -\int \dd{\vb*{\theta}} \psi^*(\vb*{\theta}) \Delta \psi(\vb*{\theta}) = \int \dd{\vb*{\theta}} \nabla \psi^*(\vb*{\theta}) \cdot \nabla \psi(\vb*{\theta}) = \int \dd{\vb*{\theta}} \Vert \nabla \psi(\vb*{\theta}) \Vert^2 \geq 0
    \end{equation}
    where the integration is performed over the domain $[0, 2\pi)^n$ and the second equality follows from integration by parts (boundary terms vanish due to the periodic boundary conditions). Further
    \begin{equation}
        \begin{split}
            \abs{\langle \psi, h_{ij} \psi \rangle} = \abs{\int \dd{\vb*{\theta}} \psi^*(\vb*{\theta}) h_{ij}(\vb*{\theta}) \psi(\vb*{\theta})} \leq \int \dd{\vb*{\theta}} \abs{h_{ij}(\vb*{\theta})} \abs{\psi(\vb*{\theta})}^2 \leq 4 \abs{\beta_{ij}} \int \dd{\vb*{\theta}} \abs{\psi(\vb*{\theta})}^2 = 4 \abs{\beta_{ij}} \langle \psi, \psi \rangle
        \end{split}
    \end{equation}
    It follows that
    \begin{equation}
        \langle \psi, H \psi \rangle \geq \frac{h}{2} \cdot 0 - 4 \sum_{\{ i,j \} \in \CE}\abs{\beta_{ij}} \langle \psi, \psi \rangle \implies R_H(\psi) = \frac{\langle \psi, H \psi \rangle}{\langle \psi, \psi \rangle} \geq - 4 \sum_{\{ i,j \} \in \CE} \abs{\beta_{ij}}
    \end{equation}
    Clearly,
    \begin{equation}
        \mu = -4 \sum_{\{ i,j \} \in \CE} \abs{\beta_{ij}}
    \end{equation}
    serves as a lower bound for the eigenvalues of the Hamiltonian $H$.
\end{proof}

\bibliographystyle{unsrt}
\bibliography{references}

\begin{thebibliography}{10}

\bibitem{mcmillan1965ground}
William~Lauchlin McMillan.
\newblock Ground state of liquid he 4.
\newblock {\em Physical Review}, 138(2A):A442, 1965.

\bibitem{carleo2017solving}
Giuseppe Carleo and Matthias Troyer.
\newblock Solving the quantum many-body problem with artificial neural
  networks.
\newblock {\em Science}, 355(6325):602--606, 2017.

\bibitem{bronstein2017geometric}
Michael~M Bronstein, Joan Bruna, Yann LeCun, Arthur Szlam, and Pierre
  Vandergheynst.
\newblock Geometric deep learning: going beyond euclidean data.
\newblock {\em IEEE Signal Processing Magazine}, 34(4):18--42, 2017.

\bibitem{bronstein2021geometric}
Michael~M Bronstein, Joan Bruna, Taco Cohen, and Petar Veli{\v{c}}kovi{\'c}.
\newblock Geometric deep learning: Grids, groups, graphs, geodesics, and
  gauges.
\newblock {\em arXiv preprint arXiv:2104.13478}, 2021.

\bibitem{boettcher2020quantum}
Igor Boettcher, Przemyslaw Bienias, Ron Belyansky, Alicia~J Koll{\'a}r, and
  Alexey~V Gorshkov.
\newblock Quantum simulation of hyperbolic space with circuit quantum
  electrodynamics: From graphs to geometry.
\newblock {\em Physical Review A}, 102(3):032208, 2020.

\bibitem{zhu2021quantum}
Xingchuan Zhu, Jiaojiao Guo, Nikolas~P Breuckmann, Huaiming Guo, and Shiping
  Feng.
\newblock Quantum phase transitions of interacting bosons on hyperbolic
  lattices.
\newblock {\em arXiv preprint arXiv:2103.15274}, 2021.

\bibitem{kollar2019hyperbolic}
Alicia~J Koll{\'a}r, Mattias Fitzpatrick, and Andrew~A Houck.
\newblock Hyperbolic lattices in circuit quantum electrodynamics.
\newblock {\em Nature}, 571(7763):45--50, 2019.

\bibitem{iblisdir2007matrix}
S~Iblisdir, R~Orus, and JI~Latorre.
\newblock Matrix product states algorithms and continuous systems.
\newblock {\em Physical Review B}, 75(10):104305, 2007.

\bibitem{gutzwiller1985geometry}
Martin~C Gutzwiller.
\newblock The geometry of quantum chaos.
\newblock {\em Physica Scripta}, 1985(T9):184, 1985.

\bibitem{white1992density}
Steven~R White.
\newblock Density matrix formulation for quantum renormalization groups.
\newblock {\em Physical review letters}, 69(19):2863, 1992.

\bibitem{white1993density}
Steven~R White.
\newblock Density-matrix algorithms for quantum renormalization groups.
\newblock {\em Physical Review B}, 48(14):10345, 1993.

\bibitem{glimm2012quantum}
James Glimm and Arthur Jaffe.
\newblock {\em Quantum physics: a functional integral point of view}.
\newblock Springer Science \& Business Media, 2012.

\bibitem{leveque2007finite}
Randall~J LeVeque.
\newblock {\em Finite difference methods for ordinary and partial differential
  equations: steady-state and time-dependent problems}.
\newblock SIAM, 2007.

\bibitem{saad2003iterative}
Yousef Saad.
\newblock {\em Iterative methods for sparse linear systems}.
\newblock SIAM, 2003.

\bibitem{heroux2005overview}
Michael~A Heroux, Roscoe~A Bartlett, Vicki~E Howle, Robert~J Hoekstra,
  Jonathan~J Hu, Tamara~G Kolda, Richard~B Lehoucq, Kevin~R Long, Roger~P
  Pawlowski, Eric~T Phipps, et~al.
\newblock An overview of the trilinos project.
\newblock {\em ACM Transactions on Mathematical Software (TOMS)},
  31(3):397--423, 2005.

\bibitem{beylkin2002numerical}
Gregory Beylkin and Martin~J Mohlenkamp.
\newblock Numerical operator calculus in higher dimensions.
\newblock {\em Proceedings of the National Academy of Sciences},
  99(16):10246--10251, 2002.

\bibitem{beylkin2005algorithms}
Gregory Beylkin and Martin~J Mohlenkamp.
\newblock Algorithms for numerical analysis in high dimensions.
\newblock {\em SIAM Journal on Scientific Computing}, 26(6):2133--2159, 2005.

\bibitem{oseledets2011tensor}
Ivan~V Oseledets.
\newblock Tensor-train decomposition.
\newblock {\em SIAM Journal on Scientific Computing}, 33(5):2295--2317, 2011.

\bibitem{rakhuba2016calculating}
Maxim Rakhuba and Ivan Oseledets.
\newblock Calculating vibrational spectra of molecules using tensor train
  decomposition.
\newblock {\em The Journal of chemical physics}, 145(12):124101, 2016.

\bibitem{veit2017using}
Alexander Veit and L~Ridgway Scott.
\newblock Using the tensor-train approach to solve the ground-state
  eigenproblem for hydrogen molecules.
\newblock {\em SIAM Journal on Scientific Computing}, 39(1):B190--B220, 2017.

\bibitem{cheng2019chaos}
Gong Cheng and Brian Swingle.
\newblock Chaos in a quantum rotor model.
\newblock {\em arXiv preprint arXiv:1901.10446}, 2019.

\bibitem{gubser2016nonlinear}
Steven Gubser, Zain~H Saleem, Samuel~S Schoenholz, Bogdan Stoica, and James
  Stokes.
\newblock Nonlinear sigma models with compact hyperbolic target spaces.
\newblock {\em Journal of High Energy Physics}, 2016(6):1--15, 2016.

\bibitem{huang2017neural}
Yichen Huang and Joel~E Moore.
\newblock Neural network representation of tensor network and chiral states.
\newblock {\em arXiv preprint arXiv:1701.06246}, 2017.

\bibitem{deng2017machine}
Dong-Ling Deng, Xiaopeng Li, and S~Das Sarma.
\newblock Machine learning topological states.
\newblock {\em Physical Review B}, 96(19):195145, 2017.

\bibitem{carleo2018constructing}
Giuseppe Carleo, Yusuke Nomura, and Masatoshi Imada.
\newblock Constructing exact representations of quantum many-body systems with
  deep neural networks.
\newblock {\em Nature communications}, 9(1):1--11, 2018.

\bibitem{chen2018equivalence}
Jing Chen, Song Cheng, Haidong Xie, Lei Wang, and Tao Xiang.
\newblock Equivalence of restricted boltzmann machines and tensor network
  states.
\newblock {\em Physical Review B}, 97(8):085104, 2018.

\bibitem{rrapaj2021exact}
Ermal Rrapaj and Alessandro Roggero.
\newblock Exact representations of many-body interactions with
  restricted-boltzmann-machine neural networks.
\newblock {\em Physical Review E}, 103(1):013302, 2021.

\bibitem{pei2021compact}
Michael~Y Pei and Stephen~R Clark.
\newblock Compact neural-network quantum state representations of jastrow and
  stabilizer states.
\newblock {\em arXiv preprint arXiv:2103.09146}, 2021.

\bibitem{trefethen1997numerical}
Lloyd~N Trefethen and David Bau~III.
\newblock {\em Numerical linear algebra}, volume~50.
\newblock SIAM, 1997.

\end{thebibliography}

\end{document}